\documentclass[final,5p,times,twocolumn]{elsarticle}

\usepackage{amssymb}
\usepackage{amsmath}
\usepackage{amsthm}
\usepackage{graphicx}
\usepackage{enumerate}
\usepackage{bm}
\newtheorem{proposition}{Proposition}
\newtheorem{theorem}{Theorem}

\newtheorem{remark}{Remark}
\newtheorem{definition}{Definition}

\begin{document}

\begin{frontmatter}



\title{Port-Hamiltonian Systems with Dissipation Potential: Modelling and Trajectory Tracking Control} 


\author[add1]{Jinjun Jia}
\author[add1]{Yuchen Liao}
\author[add1]{Kang An}
\author[add1,add2,add3]{Xun Yan}
\author[add1,add2,add3]{Tiedong Zhang}
\author[add1,add2,add3]{Dapeng Jiang\corref{cor1}}
\ead{jiangdp5@mail.sysu.edu.cn}
\cortext[cor1]{Corresponding author}
\address[add1]{School of Ocean Engineering and Technology, Sun Yat-sen University $\&$ Southern Marine Science and Engineering Guangdong Laboratory (Zhuhai). Zhuhai, 519000, China}
\address[add2]{Zhuhai Research Center, Hanjiang National Laboratory. Zhuhai, 519000, China}
\address[add3]{Guangdong Provincial Key Laboratory of Information Technology for Deep Water Acoustics. Zhuhai, 519082, China}

\begin{abstract}
Port-Hamiltonian systems (PHS) and interconnection and damping assignment passivity-based control (IDA-PBC) have achieved broad success in modelling and stabilisation of physical systems. However, the absence of a dedicated scalar potential for the momentum channel forces any modification of the momentum-dependent dynamics to proceed indirectly through the interconnection and damping matrices, rendering the matching partial differential equation (PDE) difficult to solve and complicating extensions to trajectory tracking. This paper proposes a port-Hamiltonian system with dissipation potential (PHS-DP), in which the damping matrix is replaced by scalar convex dissipation potentials, providing independent scalar objects for the momentum and auxiliary state channels and restoring the variational symmetry between stored and dissipated energy. Building on this framework, Dual Potential Shaping Control (DPSC) achieves trajectory tracking by sequentially shaping the potential energy and dissipation potentials without modifying the interconnection structure. Contraction of the closed-loop cascade is established via a hierarchical contraction argument, and the matching condition is satisfied automatically for any admissible choice of shaped potentials, requiring no PDE to be solved. In contrast to existing PDE-free energy shaping approaches, which achieve this by abandoning the port-Hamiltonian closed-loop structure and sacrificing physical interpretability, the proposed framework preserves the interconnection structure and retains a transparent energy-based interpretation at every stage of the design. Validation on a magnetic levitation system demonstrates tracking performance comparable to timed IDA-PBC with substantially reduced design complexity.
\end{abstract}

\begin{keyword}
Port-Hamiltonian systems \sep
Dissipation potential \sep
Trajectory tracking \sep
Passivity-based control \sep
Contraction analysis
\end{keyword}

\end{frontmatter}



\section{Introduction}
\label{sec:intro}
Port-Hamiltonian systems (PHS) extend classical Hamiltonian mechanics to open, dissipative systems, providing an energy-based modelling framework in which the storage, dissipation, and exchange of energy are made explicit~\cite{VanDerSchaft2014,Duindam2009}. Building on this foundation, the interconnection and damping assignment passivity-based control (IDA-PBC) method~\cite{Ortega2002} exploits the physical structure of the system to achieve stabilisation without nonlinearity cancellation, yielding controllers that are physically interpretable and inherently robust to model uncertainty, and has been successfully applied to mechanical~\cite{Franco2024}, electromechanical~\cite{Rodriguez2003}, and process systems~\cite{Doerfler2009}.

However, the application of IDA-PBC faces significant obstacles. The method addresses port-Hamiltonian systems with a Poisson structure, whose states decompose into generalised coordinates $\mathbf{q}$, generalised momenta $\mathbf{p}$, and additional states $\mathbf{s}$~\cite{VanderSchaft2000}. Two structurally distinct formulations arise depending on whether $\mathbf{s}$ is present. When $\mathbf{s}$ is absent, as in standard mechanical PHS whose Hamiltonian separates into kinetic and potential energy, the parameterised IDA-PBC requires solving a kinetic energy PDE and a potential energy PDE to obtain the desired kinetic and potential energies; for underactuated systems, both must be shaped simultaneously, and the kinetic energy PDE is a nonlinear PDE in the desired inertia matrix $M_d$ that is generally difficult to solve. When $\mathbf{s}$ is present, as in electromechanical systems such as magnetic levitation whose interconnection matrix contains zero rows corresponding to Casimir directions, the non-parameterised formulation applies: $\mathbf{J}_d$ and $\mathbf{R}_d$ must first be modified to eliminate these Casimir directions, after which the matching PDE for $H_d$ can be solved. In both formulations, IDA-PBC is subject to two fundamental limitations: (i) the matching PDE is in general difficult to solve analytically; and (ii) the energy interpretation of the closed-loop Hamiltonian $H_d$ becomes unclear. Constructive procedures that avoid solving the matching PDE have been proposed~\cite{Borja2016} by abandoning the objective of preserving the port-Hamiltonian structure in closed loop; however, this sacrifices the physical interpretability that is central to the PHS framework.

Extending IDA-PBC to trajectory tracking introduces additional difficulties, since the error dynamics of nonlinear PHS generically does not admit a port-Hamiltonian structure~\cite{Yaghmaei2017}, rendering the standard approach of recasting tracking as a stabilisation problem incompatible with the IDA-PBC framework. Fujimoto et al.~\cite{Fujimoto2003} circumvented this obstacle via a generalised canonical transformation, though at the cost of solving an additional PDE and without constituting a genuine extension of IDA-PBC. Yaghmaei and Yazdanpanah~\cite{Yaghmaei2015,Yaghmaei2017} proposed the timed IDA-PBC (tIDA-PBC), which extends IDA-PBC to trajectory tracking through contraction theory, circumventing error dynamics entirely. This framework was subsequently generalised to time-varying~\cite{Barabanov2019} and state-modulated~\cite{Yaghmaei2023} matrix structures. Nevertheless, tIDA-PBC fully inherits both fundamental limitations of IDA-PBC identified above.

The root cause of these limitations lies in the absence of a dedicated scalar potential for the cotangent space in the standard PHS formulation: kinetic energy cannot serve this role. Consequently, no scalar object exists through which the momentum-dependent part can be shaped directly, which not only precludes momentum-space control objectives such as velocity regulation, but also forces any modification of the momentum-dependent part of $H_d$ to proceed indirectly through $\mathbf{J}_d$ and $\mathbf{R}_d$, coupling the dissipation design to the interconnection structure and rendering both the matching equation and the physical interpretation intractable. This structural deficiency is closely related to the dissipation obstacle identified in~\cite{Venkatraman2010}: coordinates affected by natural damping cannot be shaped via control by interconnection precisely because the damping matrix, rather than a scalar potential, governs the dissipative dynamics. Whereas~\cite{Venkatraman2010} addresses this obstacle by introducing alternate passive input-output pairs within the standard PHS framework, the present work takes a different approach by replacing the damping matrix itself with a scalar dissipation potential, treating dissipation as an independent physical object on equal footing with the stored energy. The classical Rayleigh dissipation function offers a more natural perspective, representing dissipation as an independent convex scalar potential whose gradient generates the dissipative force, in direct analogy to the role of the Hamiltonian in encoding stored energy. Motivated by this observation, we propose a modelling framework termed port-Hamiltonian systems with a dissipation potential (PHS-DP), in which the damping matrix is replaced by a convex scalar dissipation potential $\mathcal{F}(x)$. By treating stored and dissipated energies as two independent scalar objects with explicit physical meaning, PHS-DP restores the energy-based symmetry of the port-Hamiltonian framework and provides the conditions for constructing a scalar potential over the momentum space.

Building on PHS-DP, we propose \emph{Dual Potential Shaping Control} (DPSC) for trajectory tracking. DPSC shapes two physically transparent scalar functions: a desired potential energy that modifies the potential energy component of the system Hamiltonian to encode the desired position trajectory, and a desired dissipation potential that is modified to encode the desired momentum trajectory. The interconnection structure is preserved entirely throughout the design. Contraction of the closed-loop PHS-DP is established via a hierarchical contraction argument. Crucially, the matching condition is satisfied automatically for any admissible pair of shaped potentials, requiring no PDE to be solved, and the entire design proceeds from a physically interpretable, energy-based perspective that is transparent at every stage.

The remainder of this paper is organised as follows. Section~\ref{sec:pre} recalls the necessary preliminaries. Section~\ref{sec:model} presents the PHS-DP framework. Section~\ref{sec:control} proposes DPSC and establishes its convergence properties. Section~\ref{sec:sim} validates the approach on the magnetic levitation system. Section~\ref{sec:con} concludes the paper.

\section{Preliminaries}
\label{sec:pre}

\subsection{Port-Hamiltonian Systems}

A PHS is defined by the tuple $(H, J, \mathbf{R}, \mathbf{g})$ and governed by~\cite{VanDerSchaft2014,Duindam2009}
\begin{alignat}{2}
\dot{\mathbf{x}} &= \bigl(\mathbf{J}(\mathbf{x}) - \mathbf{R}(\mathbf{x})\bigr)\nabla H(\mathbf{x})
+ \mathbf{g}(\mathbf{x})\,\mathbf{u},
\label{eq:phs}\\[4pt]
\mathbf{y} &= \mathbf{g}^\top(\mathbf{x})\nabla H(\mathbf{x}),
\label{eq:phs_output}
\end{alignat}
where $\mathbf{x} \in \mathbb{R}^n$ is the state, $H:\mathbb{R}^n \to \mathbb{R}$ is the Hamiltonian representing the total stored energy, $\mathbf{J}(\mathbf{x}) = -\mathbf{J}^\top(\mathbf{x})$ is the interconnection matrix, $\mathbf{R}(\mathbf{x}) = \mathbf{R}^\top(\mathbf{x}) \succeq \mathbf{0}$ is the damping matrix, $\mathbf{g}(\mathbf{x}) \in \mathbb{R}^{n \times m}$ is the input map, and $(\mathbf{u}, \mathbf{y}) \in \mathbb{R}^m \times \mathbb{R}^m$ are conjugate port variables. The power balance
\begin{equation}
\dot{H} = -\nabla H^\top \mathbf{R}\,\nabla H + \mathbf{y}^\top \mathbf{u} \leq \mathbf{y}^\top \mathbf{u}
\label{eq:phs_power}
\end{equation}
confirms that~\eqref{eq:phs}--\eqref{eq:phs_output} is passive with storage function $H$ and supply rate $\mathbf{y}^\top \mathbf{u}$.

\subsection{Trajectory Tracking IDA-PBC}

The tIDA-PBC framework~\cite{Yaghmaei2015,Yaghmaei2017} is summarised as follows. A trajectory $\mathbf{x}^\star(t)$ is \emph{feasible} for~\eqref{eq:phs} if there exists $\mathbf{u}^\star(t)$ such that
\begin{equation}
\dot{\mathbf{x}}^\star =
\bigl(\mathbf{J}(\mathbf{x}^\star) - \mathbf{R}(\mathbf{x}^\star)\bigr)
\nabla H(\mathbf{x}^\star) + \mathbf{g}(\mathbf{x}^\star)\,\mathbf{u}^\star(t)
\label{eq:feasible_tida}
\end{equation}
holds for all $t \geq 0$. Suppose there exist a skew-symmetric matrix $\mathbf{J}_d$, a positive semidefinite matrix $\mathbf{R}_d$, and a function $H_d(\mathbf{x}, \mathbf{x}^\star(t))$ satisfying the contraction conditions of~\cite{Yaghmaei2017}, the matching equation
\begin{equation}
\mathbf{g}^\perp\bigl[(\mathbf{J} - \mathbf{R})\nabla H\bigr]
= \mathbf{g}^\perp\bigl[(\mathbf{J}_d - \mathbf{R}_d)\nabla H_d\bigr],
\label{eq:matching_tida}
\end{equation}
where $\mathbf{g}^\perp(\mathbf{x})$ denotes a full-rank left annihilator of $\mathbf{g}(\mathbf{x})$, and the trajectory evolution condition
\begin{equation}
\dot{\mathbf{x}}^\star = (\mathbf{J}_d - \mathbf{R}_d)\nabla H_d(\mathbf{x}^\star, \mathbf{x}^\star(t)).
\label{eq:target_tida}
\end{equation}
Then the control law
\begin{equation}
\mathbf{u} =
\bigl(\mathbf{g}^\top \mathbf{g}\bigr)^{-1}\mathbf{g}^\top
\bigl[(\mathbf{J}_d - \mathbf{R}_d)\nabla H_d
- (\mathbf{J} - \mathbf{R})\nabla H\bigr]
\label{eq:control_tida}
\end{equation}
renders the closed-loop system locally exponentially tracking of $\mathbf{x}^\star(t)$. This framework was subsequently extended to state-dependent closed-loop interconnection and damping matrices in~\cite{Yaghmaei2023}.

\subsection{Contraction Analysis}

Consider a nonlinear system $\dot{\mathbf{x}} = f(\mathbf{x}, t)$ with virtual displacement $\delta\mathbf{x}$ evolving as $\delta\dot{\mathbf{x}} = (\partial f / \partial \mathbf{x})\,\delta\mathbf{x}$~\cite{Lohmiller1998}.

\begin{definition}[\cite{Lohmiller1998}]
A region $\mathcal{C} \subseteq \mathbb{R}^n$ is a \emph{contraction region} with respect to a uniformly positive definite metric $\mathbf{M}(\mathbf{x},t) = \mathbf{\Theta}^\top(\mathbf{x},t)\mathbf{\Theta}(\mathbf{x},t) \succ \mathbf{0}$ if there exists $\beta > 0$ such that
\begin{equation}
\dot{\mathbf{M}} + \left(\frac{\partial f}{\partial \mathbf{x}}\right)^\top \mathbf{M}
+ \mathbf{M}\,\frac{\partial f}{\partial \mathbf{x}}
\preceq -2\beta \mathbf{M},
\qquad \forall\,\mathbf{x} \in \mathcal{C},\,t \geq 0.
\label{eq:contraction}
\end{equation}
\end{definition}

\begin{proposition}[\cite{Lohmiller1998}]
\label{prop:contraction}
If $\mathcal{C}$ is a contraction region and is forward invariant, then any two trajectories initialised in $\mathcal{C}$ satisfy $\|\delta\mathbf{x}(t)\|_{\mathbf{M}} \leq \|\delta\mathbf{x}(0)\|_{\mathbf{M}}\,e^{-\beta t}$. If $\mathcal{C} = \mathbb{R}^n$, contraction is global. In particular, if a feasible trajectory $\mathbf{x}^\star(t)$ lies in $\mathcal{C}$, all trajectories initialised in $\mathcal{C}$ converge exponentially to $\mathbf{x}^\star(t)$.
\end{proposition}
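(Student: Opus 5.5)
The plan is to study how the separation between two trajectories evolves, using the virtual-displacement equation $\delta\dot{\mathbf{x}} = (\partial f/\partial\mathbf{x})\,\delta\mathbf{x}$. I will bound its squared length in the metric $\mathbf{M}$, and then integrate the virtual displacement along a path joining the two trajectories to obtain the statement about finite distances.

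\textbf{Step 1 (differential bound).} Define $V(\mathbf{x},\delta\mathbf{x},t) = \delta\mathbf{x}^\top \mathbf{M}(\mathbf{x},t)\,\delta\mathbf{x} = \|\delta\mathbf{x}\|_{\mathbf{M}}^2$. Differentiating along the flow gives
\begin{equation*}
\dot V = \delta\mathbf{x}^\top\Bigl(\dot{\mathbf{M}} + \bigl(\tfrac{\partial f}{\partial \mathbf{x}}\bigr)^\top\mathbf{M} + \mathbf{M}\tfrac{\partial f}{\partial \mathbf{x}}\Bigr)\delta\mathbf{x}.
\end{equation*}
Whenever $\mathbf{x}\in\mathcal{C}$, condition~\eqref{eq:contraction} bounds this by $-2\beta V$. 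Comparison (Gr\"onwall) then gives $V(t)\le V(0)e^{-2\beta t}$. Taking square roots yields $\|\delta\mathbf{x}(t)\|_{\mathbf{M}}\le\|\delta\mathbf{x}(0)\|_{\mathbf{M}}e^{-\beta t}$. This step needs the trajectory to remain in $\mathcal{C}$ for all $t\ge0$, which is exactly what forward invariance provides.

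\textbf{Step 2 (finite distances).} Take two initial points $\mathbf{x}_a,\mathbf{x}_b\in\mathcal{C}$ and a smooth path $\gamma(s)$, $s\in[0,1]$, joining them. Propagate every point of the path by the flow to get $\gamma(s,t)$, so that $\delta\mathbf{x} = \partial_s\gamma\,ds$. The length $\int_0^1\|\partial_s\gamma\|_{\mathbf{M}}\,ds$ then decays like $e^{-\beta t}$ by Step 1. Uniform positive definiteness, $\mathbf{M}\succeq m\mathbf{I}$ with $m>0$, turns this into Euclidean convergence, $\|\mathbf{x}_a(t)-\mathbf{x}_b(t)\|\le c\,e^{-\beta t}$.

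This step is the main obstacle: the whole path $\gamma(s,t)$ must stay inside $\mathcal{C}$. It is automatic if $\mathcal{C}$ is convex, or more generally path-connected with the connecting path chosen inside $\mathcal{C}$, since forward invariance then keeps every point of the path in $\mathcal{C}$. I will assume this, which is implicit in the statement. If $\mathcal{C}=\mathbb{R}^n$ the condition holds trivially, which gives global contraction.

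\textbf{Step 3 (tracking).} A feasible trajectory $\mathbf{x}^\star(t)\in\mathcal{C}$ is itself a solution of $\dot{\mathbf{x}}=f(\mathbf{x},t)$. Applying Step 2 with $\mathbf{x}_b=\mathbf{x}^\star(0)$ shows that every trajectory starting in $\mathcal{C}$ converges exponentially to $\mathbf{x}^\star(t)$.
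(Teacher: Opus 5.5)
Your argument is correct and is the standard Lohmiller--Slotine proof: bound the differential Lyapunov function $\delta\mathbf{x}^\top\mathbf{M}\,\delta\mathbf{x}$ via \eqref{eq:contraction}, then integrate along a connecting path. The paper does not prove this proposition; it only cites \cite{Lohmiller1998}. Your caveat that the connecting path must stay in $\mathcal{C}$ (for instance, $\mathcal{C}$ path-connected) is a real hypothesis that the paper's statement leaves implicit. Without it, two disjoint forward-invariant contracting components give trajectories that never converge to each other.
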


\subsection{Hierarchical Contraction}

\begin{proposition}[\cite{Lohmiller1998,Slotine2003}]
\label{prop:hierarchical}
Consider the cascade system
\begin{equation}
\frac{d}{dt}
\begin{pmatrix} \mathbf{x}_1 \\ \mathbf{x}_2 \end{pmatrix}
=
\begin{pmatrix} f_1(\mathbf{x}_1, t) \\ f_2(\mathbf{x}_1, \mathbf{x}_2, t) \end{pmatrix},
\label{eq:cascade_general}
\end{equation}
whose virtual dynamics takes the lower-triangular form
\begin{equation}
\frac{d}{dt}
\begin{pmatrix} \delta\mathbf{x}_1 \\ \delta\mathbf{x}_2 \end{pmatrix}
=
\begin{pmatrix}
\mathbf{F}_{11} & \mathbf{0} \\
\mathbf{F}_{21} & \mathbf{F}_{22}
\end{pmatrix}
\begin{pmatrix} \delta\mathbf{x}_1 \\ \delta\mathbf{x}_2 \end{pmatrix},
\qquad
\mathbf{F}_{ij} = \frac{\partial f_i}{\partial \mathbf{x}_j}.
\label{eq:virtual_triangular}
\end{equation}
If $\dot{\mathbf{x}}_1 = f_1(\mathbf{x}_1, t)$ is contracting with rate $\beta_1 > 0$ in metric $\mathbf{M}_1$, $\dot{\mathbf{x}}_2 = f_2(\mathbf{x}_1, \mathbf{x}_2, t)$ is contracting in $\mathbf{x}_2$ with rate $\beta_2 > 0$ in metric $\mathbf{M}_2$ uniformly in $\mathbf{x}_1$, and $\mathbf{F}_{21}$ is bounded, then~\eqref{eq:cascade_general} is contracting under the composite metric $\mathrm{blkdiag}(\mathbf{M}_1,\, \varepsilon \mathbf{M}_2)$ for sufficiently small $\varepsilon > 0$. The result extends by recursion to hierarchies of arbitrary depth~\cite{Slotine2003}.
\end{proposition}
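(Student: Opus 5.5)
The plan is to build a block-diagonal metric for the full cascade and check the contraction inequality~\eqref{eq:contraction} directly. Take $\mathbf{M}_\varepsilon = \mathrm{blkdiag}(\mathbf{M}_1, \varepsilon\mathbf{M}_2)$. Differentiating $V = \delta\mathbf{x}^\top \mathbf{M}_\varepsilon \delta\mathbf{x}$ along the virtual dynamics~\eqref{eq:virtual_triangular} gives three contributions: the $\delta\mathbf{x}_1$ block, the $\delta\mathbf{x}_2$ block, and a cross term from $\mathbf{F}_{21}$.

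The diagonal blocks are handled by the hypotheses. By contraction of $f_1$,
\[
\delta\mathbf{x}_1^\top\bigl(\dot{\mathbf{M}}_1 + \mathbf{F}_{11}^\top\mathbf{M}_1 + \mathbf{M}_1\mathbf{F}_{11}\bigr)\delta\mathbf{x}_1 \le -2\beta_1\|\delta\mathbf{x}_1\|_{\mathbf{M}_1}^2 .
\]
By contraction of $f_2$ in $\mathbf{x}_2$, uniformly in $\mathbf{x}_1$, the analogous bound holds with $-2\beta_2\|\delta\mathbf{x}_2\|_{\mathbf{M}_2}^2$. Here $\dot{\mathbf{M}}_2$ is the total derivative along the cascade, which includes its dependence on $\mathbf{x}_1$; I read ``uniformly in $\mathbf{x}_1$'' as covering this. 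The only off-diagonal contribution is $2\varepsilon\,\delta\mathbf{x}_2^\top\mathbf{M}_2\mathbf{F}_{21}\delta\mathbf{x}_1$.

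To bound the cross term, I would use uniform positive definiteness: there exist constants with $m_i\mathbf{I}\preceq\mathbf{M}_i\preceq \bar m_i\mathbf{I}$. I would also assume $\|\mathbf{F}_{21}\|\le b$. Converting to metric norms then gives
\[
|2\varepsilon\,\delta\mathbf{x}_2^\top\mathbf{M}_2\mathbf{F}_{21}\delta\mathbf{x}_1| \le 2\varepsilon c\,\|\delta\mathbf{x}_1\|_{\mathbf{M}_1}\|\delta\mathbf{x}_2\|_{\mathbf{M}_2},
\qquad c = \bar m_2 b/\sqrt{m_1 m_2}.
\]
Write $a=\|\delta\mathbf{x}_1\|_{\mathbf{M}_1}$ and $d=\|\delta\mathbf{x}_2\|_{\mathbf{M}_2}$. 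Then
\[
\dot V \le -2\beta_1 a^2 - 2\varepsilon\beta_2 d^2 + 2\varepsilon c\,ad .
\]
We want $\dot V\le -2\beta(a^2+\varepsilon d^2)$ for some $\beta>0$. This holds if the $2\times2$ matrix
\[
\begin{pmatrix}\beta_1-\beta & -\varepsilon c/2\\ -\varepsilon c/2 & \varepsilon(\beta_2-\beta)\end{pmatrix}
\]
is positive semidefinite. With $\beta<\min(\beta_1,\beta_2)$, its determinant is $\varepsilon(\beta_1-\beta)(\beta_2-\beta) - \varepsilon^2c^2/4$. This is nonnegative once $\varepsilon \le 4(\beta_1-\beta)(\beta_2-\beta)/c^2$. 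That establishes~\eqref{eq:contraction} for $\mathbf{M}_\varepsilon$, which is itself uniformly positive definite. Proposition~\ref{prop:contraction} then gives exponential convergence.

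The main obstacle is the cross-term estimate. It needs the metric bounds and the bound on $\mathbf{F}_{21}$ to be uniform over the region, and it needs the $\mathbf{x}_1$-dependence of $\mathbf{M}_2$ to be accounted for. Every constant has to be chosen independently of the state.

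For deeper hierarchies, I would apply the two-block result recursively. Group the first $k$ layers as $\mathbf{x}_1$ and layer $k+1$ as $\mathbf{x}_2$. Each step produces a new block-diagonal metric with weights $\varepsilon_1\gg\varepsilon_2\gg\cdots$. The blocks $\mathbf{F}_{k+1,j}$ must be bounded so that the aggregated off-diagonal coupling stays bounded at each step.
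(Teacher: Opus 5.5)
Your proposal is correct and follows the argument the statement itself points to; the paper cites the result without giving its own proof. Down-weighting the $\delta\mathbf{x}_2$ block of the metric by $\varepsilon$ makes the bounded coupling $\mathbf{F}_{21}$ contribute an $O(\varepsilon)$ off-diagonal term against an $O(\sqrt{\varepsilon})$ geometric mean of the diagonal margins, so it is dominated for small $\varepsilon$, as in the standard hierarchical-combination argument of Lohmiller and Slotine, and your explicit handling of the uniform metric bounds and of the $\mathbf{x}_1$-dependence of $\dot{\mathbf{M}}_2$ just makes precise hypotheses the statement leaves implicit.
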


\section{Port-Hamiltonian Systems with Dissipation Potential}
\label{sec:model}

In the standard PHS formulation, dissipation is encoded via the damping matrix $\mathbf{R}(\mathbf{x}) = \mathbf{R}^\top(\mathbf{x}) \succeq \mathbf{0}$, which enters as a primitive structural object rather than being derived from an underlying scalar potential. This breaks the variational symmetry between stored and dissipated energy and, as argued in Section~\ref{sec:intro}, is the root cause of the fundamental limitations of IDA-PBC. We propose an alternative formulation in which dissipation is generated by a scalar dissipation potential, thereby restoring this symmetry.

We consider systems whose state decomposes as $\mathbf{x} = (\mathbf{q}^\top, \mathbf{p}^\top, \mathbf{s}^\top)^\top \in \mathbb{R}^{n_q} \times \mathbb{R}^{n_p} \times \mathbb{R}^{n_s}$, where $\mathbf{q} \in \mathbb{R}^{n_q}$ denotes the generalised coordinates, $\mathbf{p} \in \mathbb{R}^{n_p}$ the generalised momenta, and $\mathbf{s} \in \mathbb{R}^{n_s}$ the additional states. This decomposition encompasses a broad class of physical systems; the purely mechanical case corresponds to $n_s = 0$.

\begin{definition}
\label{defn:phs_dp}
A \emph{port-Hamiltonian system with dissipation potential} (PHS-DP) is defined by the tuple $(H_{qp}, H_s, \mathbf{J}, \mathcal{F}_p, \mathcal{F}_s, \mathbf{g})$ and governed by
\begin{equation}
\begin{pmatrix}
\dot{\mathbf{q}} \\[4pt] \dot{\mathbf{p}} \\[4pt] \dot{\mathbf{s}}
\end{pmatrix}
=
\underbrace{
\begin{pmatrix}
\mathbf{J}_{qq} & \mathbf{J}_{qp} & \mathbf{0} \\[4pt]
\mathbf{J}_{pq} & \mathbf{J}_{pp} & \mathbf{0} \\[4pt]
\mathbf{0} & \mathbf{0} & \mathbf{0}
\end{pmatrix}
}_{\mathbf{J}}
\begin{pmatrix}
\nabla_{\mathbf{q}} H_{qp} \\[4pt] \nabla_{\mathbf{p}} H_{qp} \\[4pt] \nabla_{\mathbf{s}} H_s
\end{pmatrix}
-
\begin{pmatrix}
\mathbf{0} \\[4pt]
\nabla_{\mathbf{p}} \mathcal{F}_p(\mathbf{p}) \\[4pt]
\nabla_{\mathbf{s}} \mathcal{F}_s(\mathbf{s})
\end{pmatrix}
+
\begin{pmatrix}
\mathbf{0} \\[4pt]
\mathbf{J}_{pq}\nabla_{\mathbf{q}}H_s(\mathbf{q},\mathbf{s}) \\[4pt]
\mathbf{g}(\mathbf{x})\,\mathbf{u}
\end{pmatrix},
\label{eq:phs_dp}
\end{equation}
\begin{equation}
\mathbf{y} = \mathbf{g}^\top(\mathbf{x})\,\nabla_{\mathbf{s}} H_s(\mathbf{q},\mathbf{s}),
\label{eq:phs_dp_output}
\end{equation}
where $H_{qp} \in C^2(\mathbb{R}^{n_q}\times\mathbb{R}^{n_p};\mathbb{R})$ is the Hamiltonian of the $(\mathbf{q},\mathbf{p})$-subsystem, $H_s \in C^2(\mathbb{R}^{n_q}\times\mathbb{R}^{n_s};\mathbb{R})$ is the Hamiltonian of the $\mathbf{s}$-subsystem with $H_s$ affine in $\mathbf{q}$ so that $\nabla_{\mathbf{q}}H_s$ depends only on $\mathbf{s}$, $\mathbf{J} = -\mathbf{J}^\top \in \mathbb{R}^{n \times n}$ is the interconnection matrix with $\mathbf{J}_{pq} = -\mathbf{J}_{qp}^\top$, $\mathbf{g}(\mathbf{x}) \in \mathbb{R}^{n_s \times m}$ is the input map, $(\mathbf{u}, \mathbf{y}) \in \mathbb{R}^m \times \mathbb{R}^m$ are conjugate port variables whose product $\mathbf{y}^\top\mathbf{u}$ represents the power supplied to the $\mathbf{s}$-channel, and $\mathcal{F}_p \in C^2(\mathbb{R}^{n_p};\mathbb{R}_{\geq 0})$, $\mathcal{F}_s \in C^2(\mathbb{R}^{n_s};\mathbb{R}_{\geq 0})$ are the \emph{dissipation potentials} satisfying
\begin{equation}
\begin{aligned}
\nabla_{\mathbf{p}}\mathcal{F}_p(\mathbf{0}) = \mathbf{0}, \quad &\nabla^2_{\mathbf{p}}\mathcal{F}_p(\mathbf{p}) \succ \mathbf{0}, \\[4pt]
\nabla_{\mathbf{s}}\mathcal{F}_s(\mathbf{0}) = \mathbf{0}, \quad &\nabla^2_{\mathbf{s}}\mathcal{F}_s(\mathbf{s}) \succ \mathbf{0}.
\end{aligned}
\label{eq:F_convex}
\end{equation}
\end{definition}

\begin{remark}
The term $\mathbf{J}_{pq}\nabla_{\mathbf{q}}H_s(\mathbf{s})$ in~\eqref{eq:phs_dp} captures the action of the $\mathbf{s}$-subsystem on the $\mathbf{p}$-dynamics through the shared coordinate $\mathbf{q}$. Since $H_s$ is affine in $\mathbf{q}$, this term depends only on $\mathbf{s}$ and is denoted $\boldsymbol{\phi}(\mathbf{s}) \triangleq \mathbf{J}_{pq}\nabla_{\mathbf{q}}H_s(\mathbf{s})$ for subsequent use in the control design.
\end{remark}

The PHS-DP framework captures all physical systems whose interconnection structure satisfies the Jacobi identity, encompassing the full scope of the port-Hamiltonian paradigm. The decomposition $\mathbf{x} = (\mathbf{q}, \mathbf{p}, \mathbf{s})$ arises naturally whenever a mechanical subsystem is coupled to a non-mechanical one, with the coupling encoded through the dependence of $H_s$ on $\mathbf{q}$. Representative examples include magnetic levitation systems (flux linkage coupled to ball position), permanent magnet synchronous motors (stator flux linked to rotor angle), hydraulically actuated manipulators (chamber pressure coupled to joint configuration), and piezoelectric actuators (electric charge coupled to mechanical displacement). In each case, the zero rows in $\mathbf{J}$ corresponding to $\mathbf{s}$ reflect the absence of direct conservative interconnection between the non-mechanical and mechanical degrees of freedom, while $\mathbf{J}_{pq}\nabla_{\mathbf{q}}H_s(\mathbf{s})$ encodes the resulting generalised force on the mechanical subsystem. When $n_s = 0$, the framework reduces to a purely mechanical system, in which the input map $\mathbf{g}$ and output $\mathbf{y}$ act entirely through the $\mathbf{p}$-channel.

\begin{remark}
Condition~\eqref{eq:F_convex} is the scalar analogue of positive definiteness of the damping matrix in the standard PHS framework. When
\begin{equation}
\begin{aligned}
\mathcal{F}_p(\mathbf{p}) &= \frac{1}{2}(\nabla_{\mathbf{p}} H_{qp})^\top \mathbf{R}_p\, \nabla_{\mathbf{p}} H_{qp}, \\[4pt]
\mathcal{F}_s(\mathbf{s}) &= \frac{1}{2}(\nabla_{\mathbf{s}} H_s)^\top \mathbf{R}_s\, \nabla_{\mathbf{s}} H_s,
\end{aligned}
\label{eq:F_quadratic}
\end{equation}
with $\mathbf{R}_p = \mathbf{R}_p^\top \succ \mathbf{0}$ and $\mathbf{R}_s = \mathbf{R}_s^\top \succ \mathbf{0}$, system~\eqref{eq:phs_dp} reduces to the standard PHS~\eqref{eq:phs} with $\mathbf{R} = \mathrm{blkdiag}(\mathbf{0},\, \mathbf{R}_p,\, \mathbf{R}_s)$, confirming that PHS-DP strictly generalises the standard framework.
\end{remark}

\begin{proposition}
\label{prop:power_balance}
The PHS-DP~\eqref{eq:phs_dp}--\eqref{eq:phs_dp_output} satisfies
\begin{equation}
\dot{H} =
- \nabla_{\mathbf{p}} H_{qp}^\top \nabla_{\mathbf{p}} \mathcal{F}_p
- \nabla_{\mathbf{s}} H_s^\top \nabla_{\mathbf{s}} \mathcal{F}_s
+ \mathbf{y}^\top \mathbf{u}
\;\leq\; \mathbf{y}^\top \mathbf{u},
\label{eq:power_balance}
\end{equation}
and is therefore passive with storage function $H = H_{qp} + H_s$ and supply rate $\mathbf{y}^\top \mathbf{u}$.
\end{proposition}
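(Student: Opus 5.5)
The plan is a direct chain-rule computation of $\dot H$ along~\eqref{eq:phs_dp}, followed by a sign argument for the dissipative terms. Since $H = H_{qp}(\mathbf{q},\mathbf{p}) + H_s(\mathbf{q},\mathbf{s})$, we have
\begin{equation*}
\dot H = \bigl(\nabla_{\mathbf{q}} H_{qp} + \nabla_{\mathbf{q}} H_s\bigr)^\top \dot{\mathbf{q}} + \nabla_{\mathbf{p}} H_{qp}^\top \dot{\mathbf{p}} + \nabla_{\mathbf{s}} H_s^\top \dot{\mathbf{s}}.
\end{equation*}
I will substitute the three rows of~\eqref{eq:phs_dp} and sort the result into four groups: conservative, coupling, dissipative and port terms.

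\emph{Conservative terms.} The terms involving only $\nabla_{\mathbf{q}} H_{qp}$ and $\nabla_{\mathbf{p}} H_{qp}$ form the quadratic form $(\nabla_{\mathbf{q}} H_{qp}^\top,\nabla_{\mathbf{p}} H_{qp}^\top)\,\mathbf{J}_{(qp)}\,(\nabla_{\mathbf{q}} H_{qp}^\top,\nabla_{\mathbf{p}} H_{qp}^\top)^\top$. Here $\mathbf{J}_{(qp)}$ is the upper-left $2\times 2$ block of $\mathbf{J}$, which is skew-symmetric because $\mathbf{J}$ is. This form therefore vanishes.

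\emph{Coupling terms.} The extra force $\mathbf{J}_{pq}\nabla_{\mathbf{q}}H_s$ in the $\mathbf{p}$-row contributes $\nabla_{\mathbf{p}} H_{qp}^\top \mathbf{J}_{pq}\nabla_{\mathbf{q}}H_s$. The term $\nabla_{\mathbf{q}} H_s^\top \dot{\mathbf{q}}$ contributes $\nabla_{\mathbf{q}} H_s^\top \mathbf{J}_{qp}\nabla_{\mathbf{p}} H_{qp}$. Using $\mathbf{J}_{pq} = -\mathbf{J}_{qp}^\top$, these two cancel exactly. This cancellation is what makes the coupling power-preserving.

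\emph{Port terms.} The $\mathbf{s}$-row contributes $\nabla_{\mathbf{s}} H_s^\top \mathbf{g}\mathbf{u} = \mathbf{y}^\top \mathbf{u}$ by~\eqref{eq:phs_dp_output}.

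\emph{Dissipative terms.} The remaining contributions are $-\nabla_{\mathbf{p}} H_{qp}^\top \nabla_{\mathbf{p}}\mathcal{F}_p - \nabla_{\mathbf{s}} H_s^\top \nabla_{\mathbf{s}}\mathcal{F}_s$. Together with the port term, this gives the equality in~\eqref{eq:power_balance}.

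The first obstacle is a leftover cross term $\nabla_{\mathbf{q}} H_s^\top \mathbf{J}_{qq}\nabla_{\mathbf{q}} H_{qp}$, which comes from $\nabla_{\mathbf{q}}H_s$ multiplying the $\mathbf{J}_{qq}$ part of $\dot{\mathbf{q}}$. It does not cancel in general. I would dispose of it by using the canonical Poisson structure underlying the decomposition into $(\mathbf{q},\mathbf{p},\mathbf{s})$, for which $\mathbf{J}_{qq}=\mathbf{0}$. If that is not admissible, the fallback is to state explicitly that $\nabla_{\mathbf{q}}H_s \in \ker \mathbf{J}_{qq}$ is assumed, as holds in all the listed examples.

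The second and main obstacle is the inequality itself, namely showing that both dissipative terms are nonpositive. Convexity and~\eqref{eq:F_convex} give monotonicity of the gradients: $\mathbf{p}^\top\nabla_{\mathbf{p}}\mathcal{F}_p(\mathbf{p}) \ge 0$ and $\mathbf{s}^\top\nabla_{\mathbf{s}}\mathcal{F}_s(\mathbf{s}) \ge 0$. What is actually needed, however, is nonnegativity of the pairing with $\nabla_{\mathbf{p}} H_{qp}$ and $\nabla_{\mathbf{s}} H_s$. I would first close this gap in the quadratic case~\eqref{eq:F_quadratic}, where each pairing takes the form $\nabla H^\top \nabla^2 H\,\mathbf{R}\,\nabla H$. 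More generally, I would argue that the dissipation potentials are understood as functions of the co-energy variables $\nabla_{\mathbf{p}} H_{qp}$ and $\nabla_{\mathbf{s}} H_s$, as for the Rayleigh function. Under that reading, monotonicity of a convex function with gradient vanishing at the origin gives $\mathbf{e}^\top\nabla\mathcal{F}(\mathbf{e})\ge \mathcal{F}(\mathbf{e})-\mathcal{F}(\mathbf{0})\ge 0$ directly. With both terms nonpositive, $\dot H \le \mathbf{y}^\top\mathbf{u}$ follows, and integrating in time yields passivity with storage function $H$.
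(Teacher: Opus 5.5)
Your derivation of the identity follows the paper's proof step for step. You use the chain rule on $H=H_{qp}+H_s$, skew-symmetry of $\mathbf{J}$ for the conservative block, cancellation of the coupling pair via $\mathbf{J}_{pq}=-\mathbf{J}_{qp}^\top$, and \eqref{eq:phs_dp_output} for the port term. Both obstacles you flag are genuine, and the paper's proof passes over them. The cross term $\nabla_{\mathbf{q}}H_s^\top\mathbf{J}_{qq}\nabla_{\mathbf{q}}H_{qp}$ is simply absent from the paper's second line. Your remedy $\mathbf{J}_{qq}=\mathbf{0}$ is exactly what the paper itself invokes later, in the proof of Theorem~\ref{thm:tracking}, and it holds for the maglev model. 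Make it an explicit hypothesis.

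For the inequality, the paper only asserts that \eqref{eq:F_convex} implies $\nabla_{\mathbf{p}}H_{qp}^\top\nabla_{\mathbf{p}}\mathcal{F}_p\ge0$ and $\nabla_{\mathbf{s}}H_s^\top\nabla_{\mathbf{s}}\mathcal{F}_s\ge0$. You are right that it does not, and the statement as literally written is in fact false. Take $H_{qp}=\tfrac12\mathbf{p}^\top\boldsymbol{\mathcal{M}}^{-1}\mathbf{p}$ with $\boldsymbol{\mathcal{M}}^{-1}=\mathrm{diag}(1,9)$, $\mathcal{F}_p=\tfrac12\mathbf{p}^\top\mathbf{A}\mathbf{p}$ with $\mathbf{A}=\bigl(\begin{smallmatrix}5&4\\4&5\end{smallmatrix}\bigr)\succ\mathbf{0}$, and $\mathbf{p}=(2,-1)^\top$. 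The pairing is $(2,-9)\cdot(6,3)=-15$. With canonical $\mathbf{J}$, $\mathbf{s}=\mathbf{0}$ (where $\nabla_{\mathbf{s}}\mathcal{F}_s=\mathbf{0}$) and $\mathbf{u}=\mathbf{0}$, one gets $\dot H=15>0$.

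Some extra hypothesis is therefore unavoidable, and your co-energy reading is a sound repair: $\mathcal{F}$ is convex in $\mathbf{e}=\nabla_{\mathbf{p}}H_{qp}$ (resp.\ $\nabla_{\mathbf{s}}H_s$), with dissipative force $\nabla_{\mathbf{e}}\mathcal{F}(\mathbf{e})$. It is also what makes the paper's remark reducing \eqref{eq:F_quadratic} to $\mathbf{R}=\mathrm{blkdiag}(\mathbf{0},\mathbf{R}_p,\mathbf{R}_s)$ correct, since literally $\nabla_{\mathbf{p}}\mathcal{F}_p=\nabla^2_{\mathbf{p}}H_{qp}\,\mathbf{R}_p\nabla_{\mathbf{p}}H_{qp}$. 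State it openly as an amendment to Definition~\ref{defn:phs_dp}, rather than presenting it as a proof of the statement as given.

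Drop your intermediate quadratic-case step. Under the literal $\mathbf{p}$-gradient reading, the pairing $\mathbf{e}^\top\nabla^2H\,\mathbf{R}\,\mathbf{e}$ need not be nonnegative, because a product of two positive definite matrices need not have a positive semidefinite symmetric part. The example above is itself of the form \eqref{eq:F_quadratic}, with $\mathbf{R}_p=\boldsymbol{\mathcal{M}}\mathbf{A}\boldsymbol{\mathcal{M}}$. That step only works in the scalar case (as for maglev) or when $\nabla^2H$ commutes with $\mathbf{R}$. Under the co-energy reading it collapses to $\mathbf{e}^\top\mathbf{R}\mathbf{e}\ge0$ with no extra work.
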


\begin{proof}
Differentiating $H = H_{qp} + H_s$ along trajectories of~\eqref{eq:phs_dp} and using skew-symmetry of $\mathbf{J}$ gives
\begin{align*}
\dot{H}
&= \nabla_{\mathbf{q}} H_{qp}^\top \dot{\mathbf{q}}
+ \nabla_{\mathbf{p}} H_{qp}^\top \dot{\mathbf{p}}
+ \nabla_{\mathbf{q}} H_s^\top \dot{\mathbf{q}}
+ \nabla_{\mathbf{s}} H_s^\top \dot{\mathbf{s}} \\
&= \nabla H^\top \mathbf{J} \nabla H
- \nabla_{\mathbf{p}} H_{qp}^\top \nabla_{\mathbf{p}} \mathcal{F}_p
- \nabla_{\mathbf{s}} H_s^\top \nabla_{\mathbf{s}} \mathcal{F}_s \\
&\quad + \nabla_{\mathbf{p}} H_{qp}^\top \mathbf{J}_{pq}\nabla_{\mathbf{q}}H_s
+ \nabla_{\mathbf{q}} H_s^\top \mathbf{J}_{qp}\nabla_{\mathbf{p}}H_{qp}
+ \nabla_{\mathbf{s}} H_s^\top \mathbf{g}\,\mathbf{u} \\
&= - \nabla_{\mathbf{p}} H_{qp}^\top \nabla_{\mathbf{p}} \mathcal{F}_p
- \nabla_{\mathbf{s}} H_s^\top \nabla_{\mathbf{s}} \mathcal{F}_s
+ \mathbf{y}^\top \mathbf{u},
\end{align*}
where $\nabla H^\top \mathbf{J} \nabla H = 0$ by skew-symmetry of $\mathbf{J}$, the terms $\nabla_{\mathbf{p}} H_{qp}^\top \mathbf{J}_{pq}\nabla_{\mathbf{q}}H_s$ and $\nabla_{\mathbf{q}} H_s^\top \mathbf{J}_{qp}\nabla_{\mathbf{p}}H_{qp}$ cancel by skew-symmetry of $\mathbf{J}_{pq} = -\mathbf{J}_{qp}^\top$, and the last equality uses the output definition~\eqref{eq:phs_dp_output}. The inequality in~\eqref{eq:power_balance} follows from the convexity conditions~\eqref{eq:F_convex}, which imply $\nabla_{\mathbf{p}} H_{qp}^\top \nabla_{\mathbf{p}} \mathcal{F}_p \geq 0$ and $\nabla_{\mathbf{s}} H_s^\top \nabla_{\mathbf{s}} \mathcal{F}_s \geq 0$.
\end{proof}

\section{Dual Potential Shaping Control}
\label{sec:control}

This section develops the DPSC framework for trajectory tracking of PHS-DP systems~\eqref{eq:phs_dp} by sequentially shaping three scalar objects: the potential energy, the $\mathbf{p}$-channel dissipation potential, and the $\mathbf{s}$-channel dissipation potential, with the interconnection structure preserved throughout and no PDE to be solved. Throughout this section, $\mathbf{J}$ is assumed constant, the coupling term $\boldsymbol{\phi}(\mathbf{s})$ is assumed invertible on $\mathcal{D}$, and $\mathbf{x}^\star(t) = (\mathbf{q}^{\star\top}(t), \mathbf{p}^{\star\top}(t), \mathbf{s}^{\star\top}(t))^\top$ denotes a feasible trajectory of~\eqref{eq:phs_dp}, i.e.\ there exists $\mathbf{u}^\star(t)$ such that~\eqref{eq:phs_dp} holds at $\mathbf{x} = \mathbf{x}^\star(t)$ for all $t \geq 0$.

\paragraph{Step 1: Potential energy shaping.}
The potential energy component of $H_{qp}$ is replaced by a desired potential $V_d(\mathbf{q}, \mathbf{q}^\star(t))$, yielding the shaped $(\mathbf{q},\mathbf{p})$-subsystem Hamiltonian
\begin{equation}
H_{qp,d}(\mathbf{x}, \mathbf{q}^\star(t)) = T(\mathbf{p}) + V_d\!\left(\mathbf{q}, \mathbf{q}^\star(t)\right),
\label{eq:Hd}
\end{equation}
where $T(\mathbf{p})$ is the kinetic energy, unchanged by the design. The function $V_d$ is required to satisfy
\begin{equation}
\begin{aligned}
\nabla_{\mathbf{q}} V_d\!\left(\mathbf{q}^\star(t), \mathbf{q}^\star(t)\right) &= \mathbf{0}, \\
\nabla^2_{\mathbf{q}} V_d(\mathbf{q}, \mathbf{q}^\star(t)) &\succeq \lambda_q \mathbf{I} \succ \mathbf{0},
\quad \forall\, \mathbf{q},\, t \geq 0,
\end{aligned}
\label{eq:Vd_cond}
\end{equation}
so that $\mathbf{q}^\star(t)$ is the strict global minimum of $V_d(\cdot, \mathbf{q}^\star(t))$ at each $t$.

\paragraph{Step 2: Dissipation shaping on the $\mathbf{p}$-channel and determination of $\mathbf{s}_d(t)$.}
With $H_{qp,d}$ fixed, the desired dissipation potential on the $\mathbf{p}$-channel is
\begin{equation}
\mathcal{F}_{p,d}\!\left(\mathbf{p}, \mathbf{p}^\star(t)\right)
=
\mathcal{F}_{p,e}\!\left(\mathbf{p} - \mathbf{p}^\star(t)\right)
- \dot{\mathbf{p}}^{\star\top}(t)\,\mathbf{p},
\label{eq:Fp_star}
\end{equation}
where $\mathcal{F}_{p,e}$ is a strictly convex error feedback potential satisfying $\nabla_{\mathbf{p}}\mathcal{F}_{p,e}(\mathbf{0}) = \mathbf{0}$ and $\nabla^2_{\mathbf{p}}\mathcal{F}_{p,e} \succ \mathbf{0}$, and $-\dot{\mathbf{p}}^{\star\top}(t)\,\mathbf{p}$ is a feedforward term encoding the reference momentum rate. The desired $(\mathbf{q}, \mathbf{p})$-subsystem dynamics takes the form
\begin{equation}
\begin{pmatrix} \dot{\mathbf{q}} \\ \dot{\mathbf{p}} \end{pmatrix}
=
\begin{pmatrix} \mathbf{J}_{qq} & \mathbf{J}_{qp} \\ \mathbf{J}_{pq} & \mathbf{J}_{pp} \end{pmatrix}
\begin{pmatrix} \nabla_{\mathbf{q}} H_{qp,d} \\ \nabla_{\mathbf{p}} H_{qp,d} \end{pmatrix}
-
\begin{pmatrix} \mathbf{0} \\ \nabla_{\mathbf{p}} \mathcal{F}_{p,d}(\mathbf{p}, \mathbf{p}^\star(t)) \end{pmatrix},
\label{eq:qp_cl}
\end{equation}
which is independent of $\mathbf{s}$ and constitutes the upper subsystem of the closed-loop cascade. The desired $\mathbf{s}$-trajectory required to realise~\eqref{eq:qp_cl} is obtained by inverting $\boldsymbol{\phi}$ through the $\mathbf{p}$-row:
\begin{align}
\mathbf{s}_d(t) = \boldsymbol{\phi}^{-1}\!\Bigl(
&\nabla_{\mathbf{p}}\mathcal{F}_p(\mathbf{p})
- \nabla_{\mathbf{p}} \mathcal{F}_{p,d}
+ \mathbf{J}_{pq}\,\nabla_{\mathbf{q}} H_{qp,d} \notag\\
&- \mathbf{J}_{pq}\,\nabla_{\mathbf{q}} H_{qp}
- \mathbf{J}_{pp}\,\nabla_{\mathbf{p}} H_{qp}
\Bigr).
\label{eq:s_d}
\end{align}

\paragraph{Step 3: Dissipation shaping on the $\mathbf{s}$-channel and control law.}
The desired dissipation potential on the $\mathbf{s}$-channel is
\begin{equation}
\mathcal{F}_{s,d}\!\left(\mathbf{s}, \mathbf{s}_d(t)\right)
=
\mathcal{F}_{s,e}\!\left(\mathbf{s} - \mathbf{s}_d(t)\right)
- \dot{\mathbf{s}}^{\star\top}(t)\,\mathbf{s},
\label{eq:Fs_star}
\end{equation}
where $\mathcal{F}_{s,e}$ is a strictly convex error feedback potential satisfying $\nabla_{\mathbf{s}}\mathcal{F}_{s,e}(\mathbf{0}) = \mathbf{0}$ and $\nabla^2_{\mathbf{s}}\mathcal{F}_{s,e} \succ \mathbf{0}$, and $-\dot{\mathbf{s}}^{\star\top}(t)\,\mathbf{s}$ is a feedforward term encoding the reference $\mathbf{s}$-trajectory rate. The closed-loop system takes the cascade form
\begin{align}
\begin{pmatrix} \dot{\mathbf{q}} \\ \dot{\mathbf{p}} \end{pmatrix}
&=
\begin{pmatrix} \mathbf{J}_{qq} & \mathbf{J}_{qp} \\ \mathbf{J}_{pq} & \mathbf{J}_{pp} \end{pmatrix}
\begin{pmatrix} \nabla_{\mathbf{q}} H_{qp,d} \\ \nabla_{\mathbf{p}} H_{qp,d} \end{pmatrix}
-
\begin{pmatrix} \mathbf{0} \\ \nabla_{\mathbf{p}} \mathcal{F}_{p,d}(\mathbf{p}, \mathbf{p}^\star(t)) \end{pmatrix},
\label{eq:cascade_qp} \\[6pt]
\dot{\mathbf{s}}
&= - \nabla_{\mathbf{s}} \mathcal{F}_{s,d}\!\left(\mathbf{s}, \mathbf{s}_d(t)\right),
\label{eq:cascade_s}
\end{align}
where~\eqref{eq:cascade_qp} is independent of $\mathbf{s}$. Since $\mathbf{u}$ acts only on the $\mathbf{s}$-row, equating~\eqref{eq:phs_dp} with~\eqref{eq:cascade_qp}--\eqref{eq:cascade_s} yields the control law
\begin{equation}
\mathbf{u} =
\bigl(\mathbf{g}^\top\mathbf{g}\bigr)^{-1}\mathbf{g}^\top\Bigl(
\nabla_{\mathbf{s}} \mathcal{F}_s(\mathbf{s})
- \nabla_{\mathbf{s}} \mathcal{F}_{s,d}\!\left(\mathbf{s}, \mathbf{s}_d(t)\right)
\Bigr),
\label{eq:control}
\end{equation}
which requires only the gradients of the open-loop and shaped dissipation potentials on the $\mathbf{s}$-channel.

\begin{theorem}[DPSC Tracking]
\label{thm:tracking}
Consider the PHS-DP~\eqref{eq:phs_dp} with constant $\mathbf{J}$, $\mathbf{g}$, inertia matrix $\boldsymbol{\mathcal{M}}(\mathbf{q}) \succ \mathbf{0}$, and feasible trajectory $\mathbf{x}^\star(t)$. Suppose $V_d$, $\mathcal{F}_{p,e}$, and $\mathcal{F}_{s,e}$ satisfy
\begin{equation}
\begin{aligned}
\nabla_{\mathbf{q}} V_d\!\left(\mathbf{q}^\star(t), \mathbf{q}^\star(t)\right) &= \mathbf{0}, \quad
\nabla^2_{\mathbf{q}} V_d \succeq \lambda_q \mathbf{I}, \\
\nabla_{\mathbf{p}} \mathcal{F}_{p,e}(\mathbf{0}) = \mathbf{0}, \quad
&\nabla^2_{\mathbf{p}} \mathcal{F}_{p,e} \succeq \lambda_p \mathbf{I}, \\
\nabla_{\mathbf{s}} \mathcal{F}_{s,e}(\mathbf{0}) = \mathbf{0}, \quad
&\nabla^2_{\mathbf{s}} \mathcal{F}_{s,e} \succeq \lambda_s \mathbf{I},
\end{aligned}
\label{eq:convexity}
\end{equation}
for some $\lambda_q, \lambda_p, \lambda_s > 0$ and all $\mathbf{x} \in \mathcal{D}$, $t \geq 0$. Suppose further that the feasibility condition
\begin{equation}
\begin{pmatrix} \dot{\mathbf{q}}^\star \\ \dot{\mathbf{p}}^\star \\ \dot{\mathbf{s}}^\star \end{pmatrix}
=
\begin{pmatrix} \mathbf{J}_{qq} & \mathbf{J}_{qp} & \mathbf{0} \\ \mathbf{J}_{pq} & \mathbf{J}_{pp} & \mathbf{0} \\ \mathbf{0} & \mathbf{0} & \mathbf{0} \end{pmatrix}
\begin{pmatrix} \nabla_{\mathbf{q}} H_{qp,d}(\mathbf{x}^\star) \\ \nabla_{\mathbf{p}} H_{qp,d}(\mathbf{x}^\star) \\ \nabla_{\mathbf{s}} H_s(\mathbf{x}^\star) \end{pmatrix}
-
\begin{pmatrix} \mathbf{0} \\ \nabla_{\mathbf{p}} \mathcal{F}_{p,d}(\mathbf{p}^\star, \mathbf{p}^\star(t)) \\ \nabla_{\mathbf{s}} \mathcal{F}_{s,d}(\mathbf{s}^\star(t), \mathbf{s}_d(t)) \end{pmatrix}
\label{eq:feasibility}
\end{equation}
holds for all $t \geq 0$. Then the control law~\eqref{eq:control} renders $\mathbf{x}^\star(t)$ exponentially stable: all trajectories initialised sufficiently close to $\mathbf{x}^\star(0)$ converge exponentially to $\mathbf{x}^\star(t)$.
\end{theorem}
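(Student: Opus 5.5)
The argument applies Proposition~\ref{prop:hierarchical} to the closed-loop cascade \eqref{eq:cascade_qp}--\eqref{eq:cascade_s} and then concludes with Proposition~\ref{prop:contraction}. First I would check that the control law \eqref{eq:control} reproduces the cascade. Substituting \eqref{eq:control} into the $\mathbf{s}$-row of \eqref{eq:phs_dp}, with $\mathbf{g}$ constant and of full column rank so that $\mathbf{g}(\mathbf{g}^\top\mathbf{g})^{-1}\mathbf{g}^\top$ acts as the identity on the relevant range, gives \eqref{eq:cascade_s}. The $\mathbf{p}$-row equals \eqref{eq:cascade_qp} exactly when $\mathbf{s}=\mathbf{s}_d$, by the construction \eqref{eq:s_d}. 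In general the $\mathbf{p}$-row equals the desired dynamics plus the mismatch term $\boldsymbol{\phi}(\mathbf{s})-\boldsymbol{\phi}(\mathbf{s}_d)$. I will therefore treat the closed loop as the desired $(\mathbf{q},\mathbf{p})$ flow driven by $\mathbf{s}$, with $\mathbf{s}_d(t)$ acting as an exogenous reference. Under the feasibility condition \eqref{eq:feasibility}, $\mathbf{x}^\star(t)$ is itself a solution of the closed loop. At $\mathbf{x}^\star$ the feedforward terms reproduce $\dot{\mathbf{p}}^\star$ and $\dot{\mathbf{s}}^\star$, and $\nabla_{\mathbf{q}}V_d(\mathbf{q}^\star,\mathbf{q}^\star)=\mathbf{0}$. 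Hence it suffices to prove that the closed loop is locally contracting.

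\textbf{Upper block.} The Jacobian of \eqref{eq:cascade_qp} is
\[
\mathbf{F}_{11}=\begin{pmatrix}\mathbf{J}_{qq}&\mathbf{J}_{qp}\\ \mathbf{J}_{pq}&\mathbf{J}_{pp}\end{pmatrix}\mathrm{blkdiag}\bigl(\nabla^2_{\mathbf{q}}V_d,\ \nabla^2_{\mathbf{p}}T\bigr)-\mathrm{blkdiag}\bigl(\mathbf{0},\ \nabla^2_{\mathbf{p}}\mathcal{F}_{p,e}\bigr).
\]
Both feedforward terms are linear in the state and therefore drop out of the Jacobian. The dissipation acts only on $\mathbf{p}$, so a block-diagonal metric will not contract directly. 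I would use the metric $\mathbf{M}_1=\mathrm{blkdiag}(\nabla^2_{\mathbf{q}}V_d,\ \mathcal{M}^{-1})$ evaluated near $\mathbf{x}^\star$ (the local Hessian of $H_{qp,d}$). With this metric the skew part cancels and leaves $-\mathcal{M}^{-1}\nabla^2\mathcal{F}_{p,e}\mathcal{M}^{-1}\preceq0$ in the $\mathbf{p}$-block, which gives only semi-contraction. I would then add the standard small cross term $\epsilon\bigl(\delta\mathbf{q}^\top\mathbf{J}_{qp}^{-\top}\cdots\delta\mathbf{p}\bigr)$ used in damped-oscillator Lyapunov arguments. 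This yields strict negativity with a rate depending on $\lambda_q$, $\lambda_p$ and the bounds on $\mathcal{M}$ in a neighbourhood of $\mathbf{x}^\star$. This step is the main obstacle. It requires $\mathbf{J}_{qp}$ to be invertible (true in the mechanical setting, $\mathbf{J}_{qp}=\mathbf{I}$), and because $T$ depends on $\mathbf{q}$ through $\mathcal{M}(\mathbf{q})$, the $\dot{\mathbf{M}}$ term and the $\mathbf{q}$-dependence of $\nabla_{\mathbf{p}}T$ can only be dominated locally, via continuity and smallness of $\mathbf{p}-\mathbf{p}^\star$. This is why the conclusion is local.

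\textbf{Lower block.} By \eqref{eq:cascade_s}, $\mathbf{F}_{22}=-\nabla^2_{\mathbf{s}}\mathcal{F}_{s,e}\preceq-\lambda_s\mathbf{I}$, so the $\mathbf{s}$-subsystem is contracting in the identity metric with rate $\lambda_s$, uniformly in the upper states. The coupling block $\mathbf{F}_{21}=\nabla^2_{\mathbf{s}}\mathcal{F}_{s,e}\,\partial\mathbf{s}_d/\partial(\mathbf{q},\mathbf{p})$ is continuous, hence bounded on a compact neighbourhood of the trajectory. The coupling of $\mathbf{s}$ back into $\mathbf{p}$ through $\boldsymbol{\phi}$ has a bounded Jacobian and vanishes on the manifold $\mathbf{s}=\mathbf{s}_d$. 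I would handle it as a perturbation, taking the weighting $\varepsilon$ of Proposition~\ref{prop:hierarchical} together with a large enough $\lambda_s$ (or, equivalently, reorder the cascade with $\mathbf{s}-\mathbf{s}_d$ as the driving block). This makes the composite metric contracting on a neighbourhood $\mathcal{C}$ of $\mathbf{x}^\star(t)$.

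\textbf{Conclusion.} Since contraction gives exponential decay of the metric distance to the particular solution $\mathbf{x}^\star(t)$, a sublevel tube of that distance around $\mathbf{x}^\star(t)$ is forward invariant. Proposition~\ref{prop:contraction} then yields exponential convergence for all initial conditions sufficiently close to $\mathbf{x}^\star(0)$.
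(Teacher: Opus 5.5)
\textbf{Verdict: genuine gap.} The step that fails is your treatment of the coupling back into $\mathbf{p}$. It cannot be fixed without an extra gain hypothesis, because the theorem as stated is false.

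You depart from the paper at the first step, and your diagnosis there is correct. The paper asserts that substituting \eqref{eq:control} into \eqref{eq:phs_dp} yields the cascade \eqref{eq:cascade_qp}--\eqref{eq:cascade_s}, and then applies Proposition~\ref{prop:hierarchical}. But \eqref{eq:control} acts only on the $\mathbf{s}$-row. The true $\mathbf{p}$-row is therefore the desired one plus $\boldsymbol{\phi}(\mathbf{s})-\boldsymbol{\phi}(\mathbf{s}_d(\mathbf{x}_{qp},t))$, and the closed loop is a feedback interconnection, not a cascade.

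The gap is in how you then dispose of that coupling.
\begin{itemize}
\item In the coordinates $(\mathbf{x}_{qp},\mathbf{s})$, the $\mathbf{x}_{qp}$-block of the true Jacobian is the open-loop one, not $\mathbf{F}_{11}$. For the maglev it is the nilpotent matrix $\bigl(\begin{smallmatrix}0&1/m\\0&0\end{smallmatrix}\bigr)$. So there is no contracting upper block to feed into Proposition~\ref{prop:hierarchical}. Moreover, the coupling $\nabla^2_{\mathbf{s}}\mathcal{F}_{s,e}\,\partial\mathbf{s}_d/\partial\mathbf{x}_{qp}$ scales with $\nabla^2_{\mathbf{s}}\mathcal{F}_{s,e}$, so enlarging $\lambda_s$ enlarges it too.
\item Reordering with $\mathbf{e}=\mathbf{s}-\mathbf{s}_d(\mathbf{x}_{qp},t)$ is not equivalent, and it is not triangular either. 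The mismatch vanishes at $\mathbf{e}=\mathbf{0}$, but its Jacobian $\partial\boldsymbol{\phi}/\partial\mathbf{s}$ does not, and contraction only sees Jacobians. Also, \eqref{eq:control} feeds forward $\dot{\mathbf{s}}^\star(t)$ rather than the closed-loop derivative of $\mathbf{s}_d$. Hence $\dot{\mathbf{e}}=-\nabla\mathcal{F}_{s,e}(\mathbf{e})+\dot{\mathbf{s}}^\star-\tfrac{d}{dt}\mathbf{s}_d$, which is driven by $\mathbf{x}_{qp}$.
\item In the $(\mathbf{x}_{qp},\mathbf{e})$ coordinates a high-gain argument does work locally. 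Both off-diagonal blocks are independent of $\mathcal{F}_{s,e}$, and the $\mathbf{e}$-block has symmetric part $\preceq-(\lambda_s-c)\mathbf{I}$ for a constant $c$ independent of $\mathcal{F}_{s,e}$. But this needs a lower bound on $\lambda_s$. That bound depends on the upper-block contraction rate and on bounds for $\partial\boldsymbol{\phi}/\partial\mathbf{s}$ and the first two derivatives of $\mathbf{s}_d$. The theorem does not assume it.
\item Your reordering would give a genuine cascade, with $\dot{\mathbf{e}}=-\nabla\mathcal{F}_{s,e}(\mathbf{e})$ driving $\mathbf{x}_{qp}$, if the feedforward were $\tfrac{d}{dt}\mathbf{s}_d$ along the closed loop. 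That is a different controller, however.
\end{itemize}

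No argument can do without that extra condition, because the statement is false as written. The paper's proof breaks at exactly the step you flagged. A counterexample comes from the paper's own maglev example:
\begin{itemize}
\item Take a constant reference $f\equiv f_0$, which is feasible with $s^\star=\sqrt{2kb}$, and the quadratic potentials of Section~\ref{sec:sim}.
\item Whatever admissible $\mathcal{F}_s$ is used, \eqref{eq:control} cancels it. The closed loop is $\dot q=p/m$, $\dot p=s^2/(2k)-b$, $\dot s=-k_s\bigl(s-s_d(q,p)\bigr)$.
\item Its linearisation at $(f_0,0,s^\star)$ has characteristic polynomial $\lambda^3+k_s\lambda^2+k_sk_p\lambda+k_sk_q/m$. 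This is Hurwitz if and only if $k_sk_p>k_q/m$.
\item With the paper's $k_q=k_p=50$ and $m=0.0844$, any $k_s<1/m\approx 11.8$, e.g.\ $k_s=10$, makes the equilibrium unstable. Yet every hypothesis holds with $\lambda_s=k_s$.
\item Adding friction $\mathcal{F}_p=\tfrac{\mu}{2}p^2$, to meet the convexity requirement on $\mathcal{F}_p$, only changes the condition to $(\mu+k_s)k_p>k_q/m$.
\end{itemize}
What can be proved is therefore a corrected theorem. One version adds an explicit gain condition on $\lambda_s$; this is your $(\mathbf{x}_{qp},\mathbf{e})$ high-gain route, made quantitative. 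The other uses the modified feedforward above. The statement for arbitrary $\lambda_s>0$ cannot be proved.
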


\begin{proof}
Substituting~\eqref{eq:control} into~\eqref{eq:phs_dp} yields the closed-loop cascade~\eqref{eq:cascade_qp}--\eqref{eq:cascade_s}, with $\mathbf{x}^\star(t)$ a trajectory thereof by~\eqref{eq:feasibility}. The virtual dynamics of the cascade takes the lower-triangular form
\begin{equation}
\frac{d}{dt}
\begin{pmatrix} \delta\mathbf{x}_{qp} \\ \delta\mathbf{s} \end{pmatrix}
=
\underbrace{\begin{pmatrix}
\mathbf{F}_{11} & \mathbf{0} \\
\mathbf{F}_{21} & \mathbf{F}_{22}
\end{pmatrix}}_{\mathbf{F}}
\begin{pmatrix} \delta\mathbf{x}_{qp} \\ \delta\mathbf{s} \end{pmatrix},
\label{eq:virtual_full}
\end{equation}
where $\delta\mathbf{x}_{qp} = (\delta\mathbf{q}^\top, \delta\mathbf{p}^\top)^\top$ and the Jacobian blocks are
\begin{align}
\mathbf{F}_{11} &= \begin{pmatrix} \mathbf{J}_{qq} & \mathbf{J}_{qp} \\ \mathbf{J}_{pq} & \mathbf{J}_{pp} \end{pmatrix}
\begin{pmatrix} \nabla^2_{\mathbf{q}}V_d & \mathbf{0} \\
\mathbf{0} & \boldsymbol{\mathcal{M}}^{-1} \end{pmatrix}
-
\begin{pmatrix} \mathbf{0} & \mathbf{0} \\
\mathbf{0} & \nabla^2_{\mathbf{p}}\mathcal{F}_{p,e} \end{pmatrix},
\label{eq:F11} \\[4pt]
\mathbf{F}_{22} &= -\nabla^2_{\mathbf{s}}\mathcal{F}_{s,e} \preceq -\lambda_s\mathbf{I},
\label{eq:F22}\\[4pt]
\mathbf{F}_{21} &= -\nabla^2_{\mathbf{s}}\mathcal{F}_{s,e}(\mathbf{s}-\mathbf{s}_d(t))\,\frac{\partial \mathbf{s}_d}{\partial \mathbf{x}_{qp}},
\label{eq:F21}
\end{align}
where $\mathbf{J}_{qq} = \mathbf{0}$ for port-Hamiltonian systems with standard symplectic structure, and by differentiating~\eqref{eq:s_d},
\begin{equation}
\frac{\partial \mathbf{s}_d}{\partial \mathbf{x}_{qp}} = \left(\frac{\partial \boldsymbol{\phi}}{\partial \mathbf{s}}\right)^{-1}
\begin{pmatrix}
\mathbf{J}_{pq}\bigl(\nabla^2_{\mathbf{q}}H_{qp,d} - \nabla^2_{\mathbf{q}}H_{qp}\bigr) \\
\nabla^2_{\mathbf{p}}\mathcal{F}_p - \nabla^2_{\mathbf{p}}\mathcal{F}_{p,d} - \mathbf{J}_{pp}\nabla^2_{\mathbf{p}}H_{qp}
\end{pmatrix}^\top.
\label{eq:dsd_dxqp}
\end{equation}

\emph{$(\mathbf{q},\mathbf{p})$-subsystem.}
For any eigenvalue $\lambda$ of $\mathbf{F}_{11}$ with eigenvector $\mathbf{v} = (\mathbf{v}_q^\top, \mathbf{v}_p^\top)^\top \neq \mathbf{0}$, premultiplying $\mathbf{F}_{11}\mathbf{v} = \lambda\mathbf{v}$ by $\mathbf{v}^*\boldsymbol{\Phi}^{-1}$ with $\boldsymbol{\Phi} \triangleq \mathrm{blkdiag}(\nabla^2_{\mathbf{q}}V_d,\,\boldsymbol{\mathcal{M}}^{-1})$ and taking the real part gives
\begin{equation}
\mathrm{Re}(\lambda)\,\mathbf{v}^*\boldsymbol{\Phi}^{-1}\mathbf{v}
= -\mathbf{v}_p^*\nabla^2_{\mathbf{p}}\mathcal{F}_{p,e}\,\boldsymbol{\mathcal{M}}\,\mathbf{v}_p \leq 0,
\label{eq:re_eig}
\end{equation}
where the skew-symmetric contribution of $\mathbf{J}_{qp}^{\mathrm{blk}}$ vanishes. If $\mathrm{Re}(\lambda) = 0$, then $\mathbf{v}_p = \mathbf{0}$, and the $\mathbf{q}$-row of $\mathbf{F}_{11}\mathbf{v} = \lambda\mathbf{v}$ with $\mathbf{J}_{qq} = \mathbf{0}$ reduces to $\lambda\mathbf{v}_q = \mathbf{J}_{qp}\boldsymbol{\mathcal{M}}^{-1}\mathbf{v}_p = \mathbf{0}$; since $\mathbf{v} \neq \mathbf{0}$ this forces $\lambda = 0$, contradicting the uniform convexity of $V_d$ which precludes a zero eigenvalue. Hence $\mathbf{F}_{11}$ is Hurwitz, and the unique positive definite solution $\mathbf{M}_{qp} \succ \mathbf{0}$ of the Lyapunov equation $\mathbf{M}_{qp}\mathbf{F}_{11} + \mathbf{F}_{11}^\top\mathbf{M}_{qp} = -\mathbf{I}$ satisfies
\begin{equation}
\mathbf{M}_{qp}\mathbf{F}_{11} + \mathbf{F}_{11}^\top\mathbf{M}_{qp} = -\mathbf{I} \preceq -\frac{1}{\lambda_{\max}(\mathbf{M}_{qp})}\mathbf{M}_{qp} = -2\beta_{qp}\mathbf{M}_{qp},
\label{eq:contraction_qp}
\end{equation}
where $\beta_{qp} \triangleq 1/(2\lambda_{\max}(\mathbf{M}_{qp})) > 0$. Hence~\eqref{eq:cascade_qp} is contracting at rate $\beta_{qp}$ in metric $\mathbf{M}_{qp}$.

\emph{$\mathbf{s}$-subsystem.}
From~\eqref{eq:F22}, \eqref{eq:cascade_s} is contracting at rate $\lambda_s$ in the identity metric, uniformly in $\mathbf{s}_d(t)$. The coupling block $\mathbf{F}_{21}$ is bounded on $\mathcal{D}$: $\nabla^2_{\mathbf{s}}\mathcal{F}_{s,e}$ is bounded by $C^2$ smoothness, and $\partial\mathbf{s}_d/\partial\mathbf{x}_{qp}$ is bounded by~\eqref{eq:dsd_dxqp} together with the $C^2$ regularity of $H_{qp}$, $H_{qp,d}$, $\mathcal{F}_p$, $\mathcal{F}_{p,d}$ and the invertibility of $\partial\boldsymbol{\phi}/\partial\mathbf{s}$ on $\mathcal{D}$.

\emph{Conclusion.}
By Proposition~\ref{prop:hierarchical}, the lower-triangular virtual dynamics~\eqref{eq:virtual_full} with contracting diagonal blocks and bounded coupling is contracting, so all trajectories initialised sufficiently close to $\mathbf{x}^\star(0)$ converge exponentially to $\mathbf{x}^\star(t)$.
\end{proof}

\section{Application to Magnetic Levitation}
\label{sec:sim}

The magnetic levitation system is modelled in PHS-DP form~\eqref{eq:phs_dp} with state $(q, p, s)^\top$, where $q$ is the ball height, $p$ the vertical momentum, and $s$ the magnetic flux linkage. The two subsystem Hamiltonians are
\begin{equation}
H_{qp}(q, p) = \frac{p^2}{2m} + bq, \qquad
H_s(q, s) = \frac{(c - q)}{2k}\,s^2,
\label{eq:H_maglev}
\end{equation}
and the open-loop dissipation potential on the $s$-channel is
\begin{equation}
\mathcal{F}_s(s) = \frac{r_2(c - q)}{2k}\,s^2.
\label{eq:F_maglev}
\end{equation}
The system takes the PHS-DP form~\eqref{eq:phs_dp} with structural matrices
\begin{equation}
\mathbf{J} =
\begin{pmatrix} 0 & 1 & 0 \\ -1 & 0 & 0 \\ 0 & 0 & 0 \end{pmatrix},
\qquad
g = 1,
\qquad
\boldsymbol{\phi}(s) = \frac{s^2}{2k},
\label{eq:sys_maglev}
\end{equation}
where $g = 1$ is the scalar input gain acting on the flux channel $\dot{s}$, $\boldsymbol{\phi}(s)$ is the electromagnetic force acting on the ball, $r_2 > 0$ is the coil resistance, and $b, c, k, m > 0$ are physical constants. The convexity condition~\eqref{eq:F_convex} requires
\begin{equation}
\frac{\partial^2 \mathcal{F}_s}{\partial s^2} = \frac{r_2(c - q)}{k} > 0,
\end{equation}
which holds under the physical constraint $q < c$. The conjugate output~\eqref{eq:phs_dp_output} is
\begin{equation}
y = \nabla_{s} H_s = \frac{(c-q)}{k}\,s,
\label{eq:output_maglev}
\end{equation}
corresponding to the coil current, so that $yu$ is the electrical power delivered to the coil. The control objective is to track a reference height $f(t)$ satisfying $m\ddot{f}(t) > -b$ for all $t \geq 0$, which ensures feasibility. The reference trajectory is
\begin{equation}
\begin{aligned}
q^\star(t) &= f(t), \qquad
p^\star(t) = m\dot{f}(t), \\
s^\star(t) &= \sqrt{2k\bigl(m\ddot{f}(t) + b\bigr)}.
\end{aligned}
\label{eq:ref_maglev}
\end{equation}

\paragraph{Step 1: Potential energy shaping.}
The potential energy $bq$ is replaced by the desired potential
\begin{equation}
V_d\!\left(q, f(t)\right) = \frac{k_q}{2}\bigl(q - f(t)\bigr)^2,
\qquad k_q > 0,
\label{eq:Vd_maglev}
\end{equation}
yielding the shaped $(\mathbf{q},\mathbf{p})$-subsystem Hamiltonian
\begin{equation}
H_{qp,d} = \frac{p^2}{2m} + \frac{k_q}{2}\bigl(q - f(t)\bigr)^2.
\label{eq:Hd_maglev}
\end{equation}
Condition~\eqref{eq:Vd_cond} is satisfied since $\nabla_{q}V_d|_{q = f(t)} = 0$ and $\partial^2 V_d/\partial q^2 = k_q > 0$.

\paragraph{Step 2: Dissipation shaping on the $p$-channel and determination of $s_d(t)$.}
Since the $p$-channel carries no open-loop dissipation, the desired dissipation potential is
\begin{equation}
\mathcal{F}_{p,d}\!\left(p, p^\star(t)\right)
= \frac{k_p}{2}\bigl(p - p^\star(t)\bigr)^2
- \dot{p}^{\star}(t)\,p,
\qquad k_p > 0.
\label{eq:Fp_maglev}
\end{equation}
Applying~\eqref{eq:s_d} and inverting $\boldsymbol{\phi}$ gives the desired flux trajectory
\begin{equation}
s_d(t) = \sqrt{2k\Bigl[
m\ddot{f}(t) + b
- k_p\bigl(p - p^\star(t)\bigr)
- k_q\bigl(q - f(t)\bigr)
\Bigr]},
\label{eq:sd_maglev}
\end{equation}
which encodes both the position error $q - f(t)$ and the momentum error $p - p^\star(t)$ through the shaped potentials.

\paragraph{Step 3: Dissipation shaping on the $s$-channel and control law.}
The desired dissipation potential on the $s$-channel is
\begin{equation}
\mathcal{F}_{s,d}\!\left(s, s_d(t)\right)
= \frac{k_s}{2}\bigl(s - s_d(t)\bigr)^2
- \dot{s}^{\star}(t)\,s,
\qquad k_s > 0,
\label{eq:Fs_maglev}
\end{equation}
where the feedforward rate is
\begin{equation}
\dot{s}^{\star}(t) =
\frac{km\dddot{f}(t)}{\sqrt{2k\bigl(m\ddot{f}(t) + b\bigr)}}.
\label{eq:sdot_maglev}
\end{equation}
The control law~\eqref{eq:control} yields
\begin{equation}
u = \nabla_{s}\mathcal{F}_s(s)
- \nabla_{s}\mathcal{F}_{s,d}\!\left(s, s_d(t)\right),
\label{eq:u_maglev_compact}
\end{equation}
which expands to
\begin{equation}
u = \frac{r_2(c - q)}{k}\,s
- k_s\bigl(s - s_d(t)\bigr)
+ \dot{s}^{\star}(t).
\label{eq:u_maglev}
\end{equation}

\begin{figure}[t]
    \centering
    \includegraphics[width=\columnwidth]{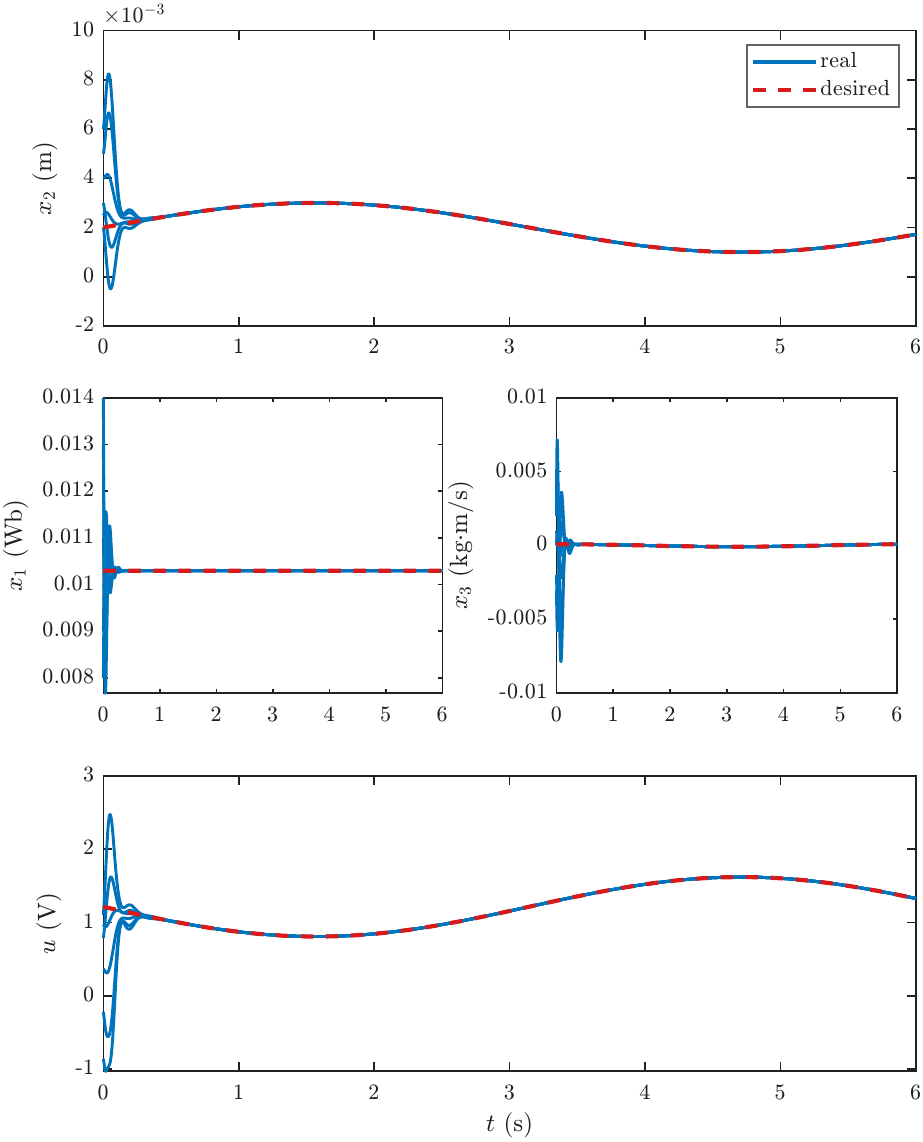}
    \caption{Trajectory tracking under multiple initial conditions via DPSC.}
    \label{fig:sim_maglev}
\end{figure}

\begin{figure}[t]
\centering
\includegraphics[width=\columnwidth]{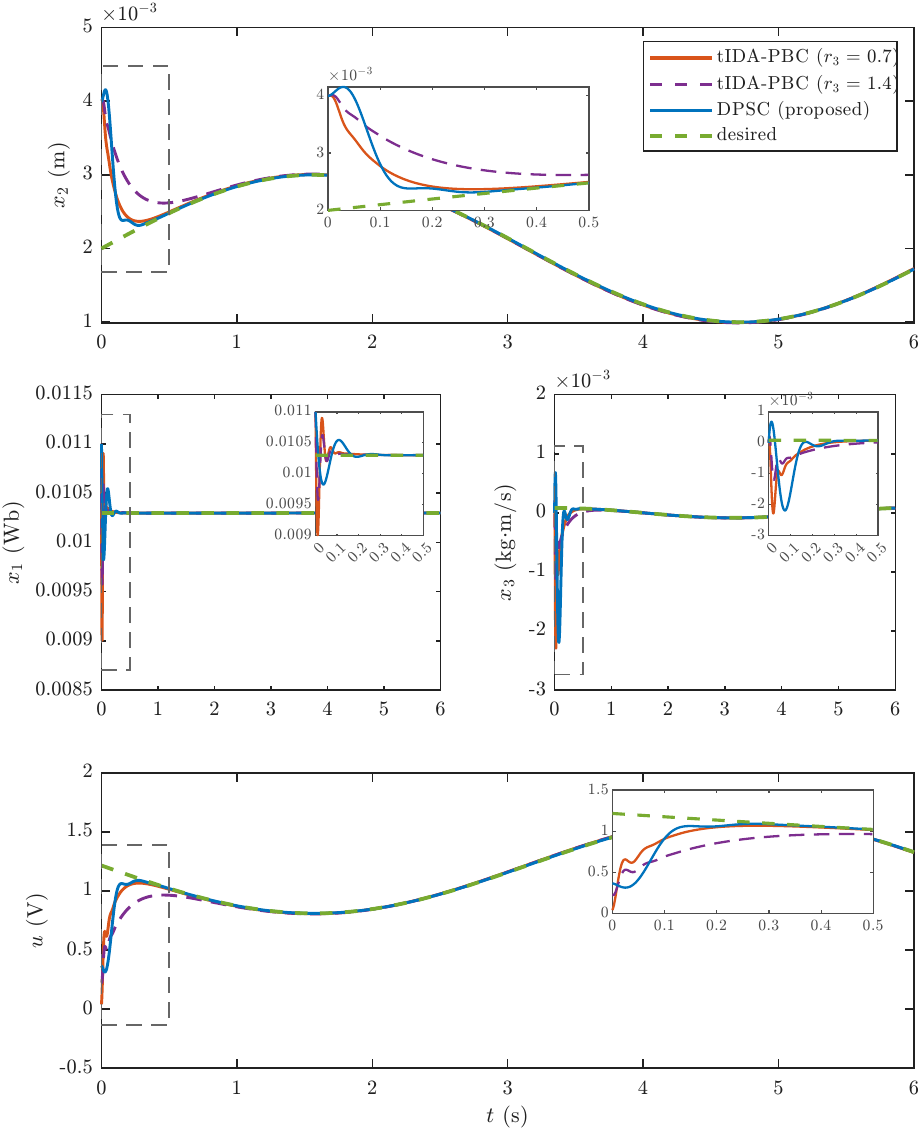}
\caption{Comparison of state trajectories and control input between tIDA-PBC ($r_3=0.7$), tIDA-PBC ($r_3=1.4$), and the proposed DPSC method.}
\label{fig:comparison_tracking}
\end{figure}

The first term cancels the open-loop resistive dissipation, the second injects error feedback driving $s \to s_d(t)$, and the third provides the reference feedforward.

Each shaped potential in the DPSC design carries an explicit physical interpretation: $V_d$ encodes the desired mechanical equilibrium, $\mathcal{F}_{p,d}$ shapes the momentum dynamics while preserving the kinetic energy structure, and $\mathcal{F}_{s,d}$ drives the electromagnetic subsystem to its desired trajectory. Throughout the design, the interconnection structure $\mathbf{J}$ is preserved entirely, respecting the physical coupling between the mechanical and electromagnetic subsystems inherent to the system.

In the DPSC framework, the matching condition is satisfied automatically. Since $u$ acts only on the $s$-row of~\eqref{eq:phs_dp}, the $(q,p)$-subsystem dynamics is unaffected by the control input, and the open-loop and closed-loop $(q,p)$-dynamics coincide on the unactuated channels:
\begin{equation}
\begin{pmatrix} \dot{q} \\ \dot{p} \end{pmatrix}
=
\begin{pmatrix} 0 & 1 \\ -1 & 0 \end{pmatrix}
\begin{pmatrix} \nabla_{q} H_{qp} \\ \nabla_{p} H_{qp} \end{pmatrix}
=
\begin{pmatrix} 0 & 1 \\ -1 & 0 \end{pmatrix}
\begin{pmatrix} \nabla_{q} H_{qp,d} \\ \nabla_{p} H_{qp,d} \end{pmatrix}
-
\begin{pmatrix} 0 \\ \nabla_{p}\mathcal{F}_{p,d} \end{pmatrix},
\label{eq:matching_auto}
\end{equation}
which is satisfied by construction through the choice of $s_d(t)$~\eqref{eq:sd_maglev}, without solving any PDE. This eliminates the principal obstacle of IDA-PBC-based methods and makes the design entirely algebraic.

The system parameters are taken from~\cite{Yaghmaei2017}:
\begin{equation*}
\begin{aligned}
&k = 6.4042\times10^{-5}\ \mathrm{N\,m/A^2},
\quad
r_2 = 2.52\ \Omega,
\quad
c = 0.005\ \mathrm{m},
\\
&b = 0.8280\ \mathrm{kg\,m/s^2},
\quad
m = 0.0844\ \mathrm{kg},
\end{aligned}
\end{equation*}
with controller gains $k_q = 50$, $k_p = 50$, $k_s = 50$. The reference trajectory is $f(t) = 10^{-3}(2 + \sin t)\ \mathrm{m}$, consistent with~\cite{Yaghmaei2017}. Simulations are initialised from multiple initial conditions to demonstrate the contraction property. The results are shown in Fig.~\ref{fig:sim_maglev}.

Fig.~\ref{fig:comparison_tracking} compares the closed-loop trajectories under the initial condition $(q_0, p_0, s_0)^\top = (0.011,\ 0.004,\ 0)^\top$. When $r_3 = 1.4$, tIDA-PBC converges slowly in $q$ and undershoots the reference during the transient. Reducing to $r_3 = 0.7$ improves the convergence rate to a level comparable to the proposed method, but introduces a pronounced overshoot in the flux $s$ and a larger initial excursion in the control input $u$. The proposed DPSC achieves similar convergence speed in $q$ with a smoother flux transient, remaining close to the desired trajectory throughout. All three controllers reach steady-state tracking by $t \approx 1\ \mathrm{s}$.

\section{Conclusion}
\label{sec:con}

This paper has proposed the PHS-DP framework and the DPSC method for trajectory tracking of port-Hamiltonian systems. By replacing the damping matrix with a scalar convex dissipation potential, PHS-DP places stored and dissipated energy on equal footing as independent physical objects, endowing the cotangent space with a dedicated scalar potential consistent with the variational structure of Hamiltonian mechanics. Building on this foundation, DPSC achieves trajectory tracking by sequentially shaping the potential energy and dissipation potentials in accordance with the desired trajectory, with exponential convergence established via a hierarchical contraction argument. The matching condition is satisfied automatically for any admissible choice of shaped potentials, eliminating the need to solve a PDE, and every step of the design retains a transparent physical interpretation. Application to a magnetic levitation system demonstrates tracking performance comparable to tIDA-PBC with substantially reduced design complexity. Future work will extend the PHS-DP framework to systems evolving on Lie groups, broadening its applicability to platforms such as UAVs and AUVs whose configuration spaces are inherently non-Euclidean.

\section*{Credit authorship contribution statement} 
$\bf{Jinjun\ Jia}$: Conceptualization, Methodology, Writing - original draft, Software. $\bf{Yuchen\ Liao}$: Data curation, Investigation. $\bf{Kang\ An}$: Investigation, Software. $\bf{Xun\ Yan}$: Visualization, Validation. $\bf{Tiedong\ Zhang}$: Supervision, Funding acquisition. $\bf{Dapeng\ Jiang}$: Conceptualization, Methodology, Funding acquisition.
\section*{Declaration of competing interest}
The authors declare that they have no known competing financial 
interests or personal relationships that could have appeared to influence the work reported in this paper.

\section*{Acknowledgements}                              
Project supported by Southern Marine Science and Engineering Guangdong Laboratory (Zhuhai) (Grant No. SML2024SP007), Department of Science and Technology of Guangdong Province (Grant No. 2025B1111130002), National Key Research and Development Program of China (Grant No. 2024YFB4710800, 2024YFB4710803, 2024YFB4710805), National Natural Science Foundation of China (Grant No. 52571375, U22A2012), the Development Programme Project of Heilongjiang Province (Grant No. GA20A402).  

\bibliographystyle{elsarticle-harv}
\bibliography{ref.bib}

\end{document}